\newcommand\id{\mathrm{id}}
\newcommand\C{\mathscr C}
\newcommand\Set{\mathbf{Set}}
\newcommand\Grp{\mathbf{Grp}}
\newcommand\Top{\mathbf{Top}}
\newcommand\Vect{\mathbf{Vec}}
\newcommand\GCA{\mathcal{GCA}}
\newcommand\End{\mathrm{End}}
\newcommand\Mor{\mathrm{Mor}}
\newcommand\CA{\mathcal{CA}}
\newcommand\Hom{\mathrm{Hom}}
\newcommand\Obj{\mathrm{Obj}}
\newcommand\Dom{\mathrm{Dom}}
\newcommand\Cod{\mathrm{Cod}}
\theoremstyle{plain}
\newtheorem{corollary}{Corollary}
\newtheorem{theorem}{Theorem}
\theoremstyle{definition}
\newtheorem{definition}{Definition}
\newtheorem{example}{Example}
\newtheorem{remark}{Remark}
\begin{document}

\title{Categorical products of cellular automata}
\author[1]{Alonso Castillo-Ramirez\footnote{Email: alonso.castillor@academicos.udg.mx}}
\author[1]{Alejandro Vazquez-Aceves \footnote{Email: alejandro.vazquez5702@alumnos.udg.mx }}
\author[1]{Angel Zaldivar-Corichi \footnote{Email: luis.zaldivar@academicos.udg.mx }}
\affil[1]{Centro Universitario de Ciencias Exactas e Ingenier\'ias, Universidad de Guadalajara, M\'exico.}

\maketitle

\begin{abstract}
We study two categories of cellular automata. First, for any group $G$, we consider the category $\CA(G)$ whose objects are configuration spaces of the form $A^G$, where $A$ is a set, and whose morphisms are cellular automata of the form $\tau : A_1^G \to A_2^G$. We prove that the categorical product of two configuration spaces $A_1^G$ and $A_2^G$ in $\CA(G)$ is the configuration space $(A_1 \times A_2)^G$. Then, we consider the category of generalized cellular automata $\GCA$, whose objects are configuration spaces of the form $A^G$, where $A$ is a set and $G$ is a group, and whose morphisms are $\phi$-cellular automata of the form $\mathcal{T} : A_1^{G_1} \to A_2^{G_2}$, where $\phi : G_2 \to G_1$ is a group homomorphism. We prove that a categorical weak product of two configuration spaces $A_1^{G_1}$ and $A_2^{G_2}$ in $\GCA$ is the configuration space $(A_1 \times A_2)^{G_1 \ast G_2}$, where $G_1 \ast G_2$ is the free product of $G_1$ and $G_2$. The previous results allow us to naturally define the product of two cellular automata in $\CA(G)$ and the weak product of two generalized cellular automata in $\GCA$. 
\\

\textbf{Keywords:} cellular automata; generalized cellular automata; category theory; categorical product.       
\end{abstract}

\section{Introduction}\label{intro}

Category theory is a powerful foundation of mathematics that provides a unifying framework to describe and analyze different structures and their relationships. Many constructions and concepts that appear in different context throughout mathematics, such as quotient spaces, direct products, adjoints, and duality, are unified with a precise definition in category theory. In particular, a \emph{categorical product}, or simply a \emph{product}, of two objects in a category is a generalization of the Cartesian product of two sets, the direct product of two groups, and the product topology of two topological spaces. This captures the idea of combining two objects to produce a new object, while preserving in some way its relationship with the original objects.  

In this paper, we study products in categories of cellular automata (CA). Recall that a \emph{cellular automaton} over a group $G$, known as the \emph{universe}, and a set $A$, known as the \emph{alphabet}, is a transformation $\tau : A^G \to A^G$ of the configuration space $A^G := \{ x : G \to A \}$ such that there exists a finite subset $S \subseteq G$ and a local function $\mu : A^S \to A$ that determines the behavior of $\tau$ (see Definition \ref{def-CA}). Traditionally, CA have been studied when $G = \mathbb{Z}^d$, but interest in the more general setting of an arbitrary group has recently grown (e.g., see \cite{CAG}). 

Before reviewing our work with more detail, we shall mention a few previous studies of cellular automata from a categorical point of view. In \cite{Siltar}, Silvio Capobianco and Tarmo Uustalu explore various properties of CA, such as the Curtis-Hedlund theorem and the reversibility principle, using category theory. In \cite{Cec2013}, Tullio Ceccherini-Silberstein and Michel Coornaert considered cellular automata $\tau : A^G \to B^G$ where $G$ is a fixed group, and $A$ and $B$ are objects in a concrete category. They prove analogous to various important results in the theory of CA, including theorems of surjunctivity and reversibility. Finally, in \cite{Ville2014}, Ville Salo and Ilkka T\"orm\"a examined various categories whose objects are subshifts of $A^{\mathbb{Z}}$ and whose morphisms are cellular automata, or block maps, between subshifts.  

Our work concerns two different categories of cellular automata. The first one is the the category $\CA(G)$ of all cellular automata over a fixed group $G$, whose objects are configuration spaces of the form $A^G$, where $A$ is any set, and whose morphisms are cellular automata $\tau : A_1^G \to A_2^G$ as defined in Definition \ref{def-CA}. The second one is the category $\GCA$ of \emph{generalized cellular automata}, whose objects are configuration spaces of the form $A^G$, where $A$ is a set and $G$ is a group, and whose morphisms are $\phi$-cellular automata $\mathcal{T} : A_1^{G_1} \to A_2^{G_2}$, where $\phi : G_2 \to G_1$ is a group homomorphism, as defined in \cite{Vaz2022}. 

We show the categorical product of two configuration spaces $A_1^G$ and $A_2^G$ in the category $\CA(G)$ is the configuration space $(A_1 \times A_2)^G$ together with the projections 
\[ \pi_i^G : (A_1 \times A_2)^G \to A_i^G, \quad \pi_i^G := \pi_i \circ x, \forall x \in (A_1 \times A_2)^G, \]
where $\pi_i : A_1 \times A_2 \to A_i$ are the natural projections of the Cartesian product of sets. This allows us to define the categorical product for a pair of cellular automata $\tau_1 : A_1^G \to B_1^G$ and $\tau_2 : A_2^G \to B_2^G$ to be the cellular automaton $\tau_1 \times \tau_2 : (A_1 \times A_2)^G \to (B_1 \times B_2)^G$ such that
\[ (\tau_1 \times \tau_2)(x)(g) = (\tau_1(\pi_{A_1}^G(x)(g)), \tau_2(\pi_{A_2}^G(x)(g)),  \quad \forall x \in (A_1 \times A_2)^G, g \in G; \] 
this coincides which what is usually considered to be the product of two CA (see \cite[Ex. 1.17]{ExCAG}).

On the other hand, we show that  a \emph{weak product} of two configuration spaces $A_1^{G_1}$ and $A_2^{G_2}$ in the category $\GCA$ is the configuration space $(A_1 \times A_2)^{G_1 \ast G_2}$, where $G_1 \ast G_2$ denotes the free product of $G_1$ and $G_2$, together with projections induced by $\pi_i$ and the canonical embeddings $\iota_i : G_i \to G_1 \ast G_2$ (see Theorem \ref{th-weak}). The definition of a weak product in a category is analogous to a product, except that the morphism that satisfies the universal property is only required to exist but not to be unique. The reason behind why this weak product in $\GCA$ fails to be a product is because generalized CA may fail to have the \emph{unique homomorphisms property} as introduced in \cite{Further}; this means that there are generalized CA $\mathcal{T} : A_1^{G_1} \to A_2^{G_2}$ that are $\phi$- and $\psi$-cellular automata for $\phi \neq \psi$. 

The structure of this paper is as follows. In Section 2 we review some basic notions of category theory, including the product and coproduct, as well as some examples. In Section 3 we show that the category of cellular automata has products, while in Section 4 we show review some basic theory of generalized cellular automata and show that their category has a weak product.

\section{Category theory}

In this section we shall review some basic notions of category theory; for a broader treatment see \cite{Handbook,Sau1970,RomanCat,IntroCat}.

A \emph{category} $\C$ consists on a class of \emph{objects}, denoted by $\Obj(\C)$, and a class of \emph{morphisms}, or \emph{arrows}, denoted by $\Mor(\C)$. Each morphism $f \in \Mor(\C)$ has a source object $\Dom(f) \in \Obj(\C)$, which we call the \emph{domain} of $f$, and an \emph{arrival} object $\Cod(f)\in \Obj(\C)$, which we call the \emph{codomain} of $f$. All the information that a morphism carries is condensed in the notation 
\[ f : \Dom(f)\to\Cod(f). \]
The class of morphisms with domain $A$ and codomain $B$ is denoted by $\Hom_\C(A,B)$. Part of the essence of category theory is that objects are treated as if they are not necessarily sets, and morphisms are treated as if they are not necessarily functions. 

The morphisms in a category must satisfy some additional axioms. For each pair of morphisms $f, g\in\Mor(\C)$, such that $\Cod(f)=\Dom(g)$, there exists a morphism $g\circ f : \Dom(f) \to \Cod(g)$, called a \emph{composition} of $g$ with $f$, which satisfies the following properties: 
\begin{enumerate}
\item For any objects $A,B,C,D\in\Obj(\C)$, and morphisms $f:A\to B,g:B\to C$ and $h:C\to D$, we have
\[ h\circ(g\circ f) = (h\circ g)\circ f. \]

\item For every object $A\in\Obj(\C)$, there exists a morphism $\id_A\in\Hom_\C( A,A)$ that satisfies the following: for every $B\in\Obj(\C)$ and $g\in\Hom_\C(A,B)$, $h\in\Hom_\C(B,A)$, 
\[ g\circ \id_A=g\text{ and }\id_A\circ h=h. \]
\end{enumerate}

To simplify the notation, when referring to an object or a morphism of a category $\C$, we will write $A\in\C$ or $f\in\C$, instead of $A\in\Obj(\C)$ or $f\in\Mor(\C)$. Moreover, when there is no doubt about the category in question, the collection $\Hom_\C(A,B)$ will be denoted simply by $\Hom(A,B)$. Morphisms in $\Hom(A,A)$ are called \emph{endomorphisms}, so we will denote such a collection as $\End(A)$.

\begin{example}\label{Ex1}
We introduce some examples of categories.
\begin{enumerate}
\item The class of sets, together with the class of all functions between sets, form the \emph{category of sets}, which we denote as $\Set$.

\item The class of all groups, together with the class of all group homomorphisms, form the \emph{category of groups}, which we denote as $\Grp$.

\item The class of all topological spaces, together with the class of all continuous functions, form the \emph{category of topological spaces}, which we denote as $\Top$.

\item Given a field $\mathbb{K}$, the class of vector spaces over $\mathbb{K}$, together with the class of linear transformations, form the \emph{category of vector spaces} over $\mathbb{K}$, which we denote by $\Vect_{\mathbb{K}}$.

\item Let $M$ be a monoid, that is, a set with an associative binary operation and with identity $1\in M$. We can view $M$ as a category with one object, i.e., $\Obj(M)=\{\bullet\}$,
and each element in $M$ is a morphism in the category, i.e., $\Mor(M)=M$, with the monoid operation as composition of morphisms. In this category the endomorphisms of the only object are all the morphisms of the category, i.e., $M=\End(\bullet)$.

\item Let $(P,\leq)$ be a partially ordered set, we can view $P$ as a category, where $\Obj(P) =P$, and given $a,b\in P$, there is a unique morphism from $a$ to $b$ if and only if $a\leq b$. By the properties of the order relation it is clear that all the category axioms are satisfied.
\end{enumerate}
\end{example}

In the first four examples, the objects of the categories are sets with additional structure, and the morphisms are functions that preserve these structures. These categories are called \emph{concrete categories}. On the other hand, the last two examples seek to point out that morphisms do not necessarily have to be functions between sets.

Now we introduce the definition of the product in an arbitrary category.

\begin{definition}[product]\label{def-prod}
Let $\C$ be a category. A \emph{product} of two objects $A$ and $B$ in $\C$ is an object $A\times B\in\C$ together with morphisms $\pi_1:A\times B\to A$ and $\pi_2:A\times B\to B$ satisfying the following universal property: for any object $X$ and any morphisms $f:X\to A$ and $g:X\to B$, there exists a unique morphism $h:X\to A\times B$ such that $\pi_1\circ h=f$ and $\pi_2\circ h=g$; this is equivalent of saying that the diagram of Figure \ref{dia-prod} commutes.
\begin{figure}[h!]
\centering
\begin{tikzcd}
& X \arrow[ld, "f"'] \arrow[rd, "g"] \arrow[d, "h", dashed] & \\
A & A\times B \arrow[l, "\pi_1"] \arrow[r, "\pi_2"'] & B
\end{tikzcd}
\caption{Definition of a product to two objects in a category.}
\label{dia-prod}
\end{figure}
\end{definition}

It turns out that if a product of two objects exists, then it is unique up to isomorphism. The morphisms $\pi_1:A\times B\to A$ and $\pi_2:A\times B\to B$ are called the \emph{projections} to $A$ and $B$, respectively. We say that the category $\C$ itself has \emph{finite products} if any two objects in $\C$ have a product.

A \emph{weak product} of two objects $A\times B\in\C$ is defined analogously as the product in Definition \ref{def-prod}, except that the morphism $h:X\to A\times B$ is not required to be unique (see \cite[Sec. X.2]{Sau1970}).

\begin{example}\label{Examples of products}\text{\\}
We introduce a few examples of products in categories.
\begin{enumerate}

\item In the concrete categories described above, the product of two objects $A$ and $B$ is the Cartesian product of sets $A\times B$ equipped with the corresponding structure so that the projections $\pi_1:A\times B\to A$ and $\pi_2:A\times B\to B$, defined as $\pi_1(a,b)=a$ and $\pi_2(a,b)=b$, are morphisms in the corresponding categories ($\Set,\Grp,\Top,\Vect_{\mathbb{K}}$, etc.). However, there exist concrete categories where the product is not the Cartesian product together with the projections described above. When it does coincide, we say that the concrete category has \emph{concrete products} (see \cite[Ex. 10.55]{concrete}).

\item Consider a set $A$ and the partially ordered set $(2^A,\subseteq)$, where $2^A$ is the power set of $A$. Considering $2^A$ as a category as in Example \ref{Ex1}.6., the product of two objects $X,Y\in 2^A$ is the set $X\cap Y$ together with the unique arrows $X\cap Y\to X$ and $X\cap Y\to Y$ satisfying that $X\cap Y\subseteq X$ and $X\cap Y\subseteq Y$.

\item Consider as a category the partially ordered set $(\mathbb{N},|)$, where $\mathbb{N}$ is the set of natural numbers and $|$ is the divisibility relation. The product of two objects $m,n\in\mathbb{N}$ is the greatest common divisor of $m$ and $n$ together with the unique morphisms given by being divisors of $m$ and $n$.
\end{enumerate}
\end{example}

When a category $\C$ has finite products we may define the \emph{product of two morphisms} of $\C$.

\begin{theorem}\label{prod-mor}
Let $\C$ be a category with finite products, and let $f_1 : A_1 \to B_1$ and $f_2 : A_2 \to B_2$ be two morphisms. Then there exists a unique morphism $f_1 \times f_2 : A_1 \times A_2 \to B_1 \times B_2$ such that
\[ \pi_{B_1} \circ (f_1 \times f_2)  = f_1 \circ \pi_{A_1} \text{ and } \pi_{B_1} \circ (f_1 \times f_2) = f_2 \circ \pi_{A_2}, \]  
where $\pi_{A_i} : A_1 \times A_2 \to A_i$ and $\pi_{B_i} : B_1 \times B_2 \to B_i$ are the projection morphisms; this is equivalent of saying that the diagram of Figure \ref{dia-prodmor} commutes.
\begin{figure}[h!]
\centering
\begin{tikzcd}
A_1 \arrow[d, "f_1"']  & A_1 \times A_2 \arrow[l, "\pi_{A_1}"'] \arrow[r, "\pi_{A_2}"] \arrow[d, "f_1 \times f_2", dashed] & A_2 \arrow[d, "f_2"'] \\
B_1 & B_1\times B_2 \arrow[l, "\pi_{B_1}"] \arrow[r, "\pi_{B_2}"'] & B_2
\end{tikzcd}
\caption{Product to two morphisms in a category.}
\label{dia-prodmor}
\end{figure}
\end{theorem}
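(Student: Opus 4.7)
The plan is to reduce the theorem to a single application of the universal property of the product $B_1 \times B_2$, as stated in Definition \ref{def-prod}. First I would notice that, since $f_1 : A_1 \to B_1$ and $f_2 : A_2 \to B_2$ are morphisms and $\pi_{A_i} : A_1 \times A_2 \to A_i$ are the projections of the product of $A_1$ and $A_2$ (which exists by the hypothesis that $\C$ has finite products), the composites
\[ f_1 \circ \pi_{A_1} : A_1 \times A_2 \to B_1 \quad \text{and} \quad f_2 \circ \pi_{A_2} : A_1 \times A_2 \to B_2 \]
are well-defined morphisms of $\C$. These are precisely the data needed to invoke the universal property of $B_1 \times B_2$ on the test object $X := A_1 \times A_2$.

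Next, I would apply Definition \ref{def-prod} to the product $B_1 \times B_2$ with $X = A_1 \times A_2$, $f = f_1 \circ \pi_{A_1}$, and $g = f_2 \circ \pi_{A_2}$. This directly yields a unique morphism $h : A_1 \times A_2 \to B_1 \times B_2$ such that $\pi_{B_1} \circ h = f_1 \circ \pi_{A_1}$ and $\pi_{B_2} \circ h = f_2 \circ \pi_{A_2}$. I would then define $f_1 \times f_2 := h$; by construction, this morphism makes the diagram of Figure \ref{dia-prodmor} commute, which is exactly the required condition. (Note that the second displayed equation in the statement should read $\pi_{B_2} \circ (f_1 \times f_2) = f_2 \circ \pi_{A_2}$, matching the figure.)

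Existence is established by this construction, and uniqueness is inherited immediately from the uniqueness clause of the universal property: any morphism $g : A_1 \times A_2 \to B_1 \times B_2$ satisfying $\pi_{B_i} \circ g = f_i \circ \pi_{A_i}$ for $i = 1, 2$ must coincide with $h$, hence with $f_1 \times f_2$. I do not anticipate any genuine obstacle here — the theorem is a textbook corollary of the definition of a product — the only mildly delicate point being the recognition of the correct cone $(f_1 \circ \pi_{A_1}, f_2 \circ \pi_{A_2})$ over $B_1, B_2$ with apex $A_1 \times A_2$ to which the universal property is applied.
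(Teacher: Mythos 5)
Your proof is correct and is exactly the paper's argument: apply the universal property of the product $B_1\times B_2$ with $X=A_1\times A_2$, $f=f_1\circ\pi_{A_1}$, $g=f_2\circ\pi_{A_2}$, and read off existence and uniqueness. Your observation that the second displayed equation in the statement should read $\pi_{B_2}\circ(f_1\times f_2)=f_2\circ\pi_{A_2}$ is also right; it is a typo in the theorem as printed.
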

\begin{proof}
This follows from the universal property of the product with $X=A_1 \times A_2$, $f = f_1\circ \pi_{A_1}$ and $g = f_2 \circ \pi_{A_2}$ (c.f \cite[p. 30]{RomanCat}).
\end{proof}

\begin{example}
In the category $\Set$, the product of two functions $f_1 : A_1 \to B_1$ and $f_2 : A_2 \to B_2$ is the function $f_1 \times f_2 : A_1 \times A_2 \to B_1 \times B_2$ defined by
\[ (f_1 \times f_2)(a_1, a_2) := (f_1(a_1) , f_2(a_2)), \quad \forall (a_1,a_2 ) \in A_1 \times A_2. \]
\end{example}

The dual notion of a product in a category, called a \emph{coproduct}, is defined by reversing the direction of the morphisms in the definition of a product. Despite this simple change in the definition, products and coproducts may behave dramatically different in a category.

\begin{definition}[coproduct]
Let $\C$ be a category, and $A,B\in\C$ be objects of the category. A \emph{coproduct} of objects $A$ and $B$ in $\C$ is an object $A+ B\in\C$ together with morphisms $\iota_1:A\to A+ B$ and $\iota_2:B\to A+ B$ satisfying the following universal property: for any object $X$ and morphisms $f:A\to X$ and $g:B\to X$ there exists a unique morphism $h:X\to A+ B$ such that $h\circ\iota_1=f$ and $h\circ\iota_2=g$; this is equivalent of saying that the diagram of Figure \ref{dia-coprod} commutes.
\begin{figure}[h!]
\centering
\begin{tikzcd}
& X & \\
A \arrow[ru, "f"] \arrow[r, "\iota_1"'] & A+ B \arrow[u, "h", dashed] & B \arrow[l, "\iota_2"] \arrow[lu, "g"']
\end{tikzcd}
\caption{Definition of the coproduct to two objects in a category.}
\label{dia-coprod}
\end{figure}
\end{definition}

\begin{example}\label{Examples of coproducts}\text{\\}
We introduce some examples of coproducts in categories. 
\begin{enumerate}
\item In the category of sets $\Set$, the coproduct of $A$ and $B$ corresponds to the disjoint union $A+ B=(A\times\{A\})\cup(B\times\{B\})$ together with the inclusions $\iota_1:A\to A+ B$ and $\iota_2:B\to A+ B$, defined as
$\iota_1(a)=(a,A)$ and $\iota_2(b)=(b,B)$.

\item In the category of groups $\Grp$, the coproduct of $G$ and $H$ corresponds to the free product of groups $G\ast H$ together with the canonical embeddings $\iota_1:G\to G\ast H$
and $\iota_2:H\to G\ast H$.

\item Consider as a category the partially ordered set $(\mathbb{N},|)$. The coproduct of two objects $m,n\in\mathbb{N}$ is the least common multiple of $m$ and $n$ together with the unique morphism given by being a multiple of $m$ and $n$.
\end{enumerate}
\end{example}


\section{Product of cellular automata}

The category of cellular automata over a group $G$, denoted by $\CA(G)$, consists of objects that are \emph{configuration spaces} over $G$; this is, sets of $A^G:=\{x:G\to A \}$ where $A$ is a set, usually called an \emph{alphabet}. The morphisms of $\CA(G)$ are \emph{cellular automata}, as given in the following definition, which is a slightly more general than Definition 1.4.1 in \cite{CAG}, as it involves two arbitrary alphabets $A$ and $B$ (see \cite[Sec. 1.1.4]{ExCAG}).

\begin{definition}\label{def-CA}
Let $A^G$ and $B^G$ be two configuration spaces. We say that a function $\tau:A^G\to B^G$ is a \emph{cellular automaton} if there exists a finite subset $S\subseteq G$ (called a \emph{memory set}) and a function $\mu:A^S\to B$ (called a \emph{local function}) such that:
\[ \tau(x)(g) = \mu((g \cdot x)|_S), \quad \forall x \in A^G, g \in G,  \]
where $\cdot$ is the \emph{shift action} of $G$ on $A^G$ given by
\[ (g\cdot x)(h)=x(hg), \quad \forall x\in A^G,  g,h\in G. \]
and $x|_S$ is the restriction of $x \in A^G$ to $S$. 
\end{definition}

\begin{remark}
Alternatively as introduced in Definition \ref{def-CA}, we may consider the shift action of $G$ on $A^G$ to be
\[ (g\star x)(h)=x(g^{-1}h), \quad \forall x\in A^G,  g,h\in G. \]
However, we may check that $\cdot$ and $\star$ are \emph{equivalent actions} in a formal sense: there exists a bijection $\beta : A^G \to A^G$ given by $\beta(x)(g) := x(g^{-1})$, for all $x \in A^G$, $g \in G$, such that 
\[ \beta(g \cdot x) = g \star \beta(x), \quad \forall x \in A^G, g \in G. \]  
\end{remark}

Since the composition of two cellular automata is a cellular automaton by \cite[Prop. 1.4.9]{CAG} and the identity function of $A^G$ is a cellular automaton, then $\CA(G)$ is a category.

\begin{example}\label{ex-ind}
For any function $f:A\to B$ between sets we may define a cellular automaton $f^G : A^G\to B^G$ by
\[ f^G(x) := f \circ x , \quad \forall x\in A^G. \]
It is clear that $f^G$ has memory set $\{e \}$, where $e \in G$ is the identity element of the group, and local function $f : A^{\{e\}} \to A$.
\end{example}

The next result shows that the category $\CA(G)$ has concrete finite products. 

\begin{theorem}\label{th-prod}
Let $G$ be a group, and let $A_2^G$ and $A_2^G$ be two configuration spaces in $\CA(G)$. The product of $A_1^G$ and $A_2^G$ in the category $\CA(G)$
is the configuration space $(A_1 \times A_2)^G$ together with the cellular automata $\pi_i^G: (A_1\times A_2)^G\to A_i^G$ induced, as in Example \ref{ex-ind}, by the usual projections $\pi_i : A_1 \times A_2 \to A_i$ in the category of sets. 
\end{theorem}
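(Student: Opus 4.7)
The plan is to exploit the fact that the uniqueness and set-theoretic existence of the comparison morphism come for free from the product structure of $\Set$ (or equivalently from direct computation), so that the real work reduces to checking that this unique set-function is actually a cellular automaton.

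First I would verify that $\pi_1^G$ and $\pi_2^G$ are indeed morphisms of $\CA(G)$; this is immediate from Example \ref{ex-ind}, which shows that any $f : A \to B$ induces a cellular automaton $f^G : A^G \to B^G$ with memory set $\{e\}$ and local function $f$ itself. Next, given any configuration space $X^G \in \CA(G)$ and any pair of cellular automata $\tau_1 : X^G \to A_1^G$ and $\tau_2 : X^G \to A_2^G$, I would define the candidate comparison morphism $h : X^G \to (A_1 \times A_2)^G$ pointwise by
\[ h(y)(g) := (\tau_1(y)(g), \tau_2(y)(g)), \quad \forall y \in X^G, \, g \in G. \]
A direct computation shows $\pi_i^G \circ h = \tau_i$, and the uniqueness of $h$ as a function follows from the universal property of the Cartesian product $(A_1 \times A_2)^G$ in $\Set$: any function $h' : X^G \to (A_1 \times A_2)^G$ satisfying $\pi_i^G \circ h' = \tau_i$ must agree with $h$ coordinate-by-coordinate.

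The main (and only non-trivial) step is to show that $h$ is itself a cellular automaton. For this, I would let $S_i \subseteq G$ be a memory set for $\tau_i$ with local function $\mu_i : X^{S_i} \to A_i$, set $S := S_1 \cup S_2$, which is finite, and define $\mu : X^S \to A_1 \times A_2$ by
\[ \mu(z) := (\mu_1(z|_{S_1}), \mu_2(z|_{S_2})), \quad \forall z \in X^S. \]
Using that $(g \cdot y)|_S$ restricts to $(g \cdot y)|_{S_i}$ on each $S_i$, a short computation gives $h(y)(g) = \mu((g \cdot y)|_S)$, so $h$ has memory set $S$ and local function $\mu$, hence lies in $\CA(G)$.

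The hard part is essentially bookkeeping: recognizing that memory sets can be taken in the union and that the local functions combine cleanly into a single map into the product alphabet. Once $h$ is exhibited as a cellular automaton, the universal property in $\CA(G)$ follows at once from the universal property in $\Set$, which also explains why this is a \emph{concrete} product in the sense of Example \ref{Examples of products}.1.
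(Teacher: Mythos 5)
Your proposal is correct and follows essentially the same route as the paper: both arguments take the union $S = S_1 \cup S_2$ as a common memory set, combine the local functions into a single map $\mu : X^S \to A_1 \times A_2$, and derive uniqueness of the comparison morphism pointwise from the projections of the Cartesian product. The only difference is cosmetic — you define the comparison map pointwise first and then exhibit its local function, whereas the paper builds it from the local function at the outset — so nothing of substance changes.
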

\begin{proof}
Let $B^G\in\CA(G)$ be a configuration space, and consider two cellular automata $\tau_1:B^G\to A_1^G$ and $\tau_2:B^G\to A_2^G$. Let $S_1, S_2\subseteq G$ be memory sets, and let $\mu_1: B^{S_1} \to A_1$ and $\mu_2: B^{S_2} \to A_2$ be local functions for $\tau_1$ and $\tau_2$, respectively. Define a finite set $S := S_1\cup S_2\subseteq G$ and local functions
\[  \hat{\mu}_1 : B^S\to A_1 \text{ and } \hat{\mu}_2 :B^S \to A_2,  \]
by $\hat{\mu}_1(x) := \mu_1(x \vert_{S_1})$ and $\hat{\mu}_2(x):= \mu_2(x \vert_{S_2})$, for all $x \in B^S$. The universal property of the product in the category of sets tells us that there exists a unique function 
\begin{equation}\label{eq-proj}
 \mu:B^S\to A_1 \times A_2 \text{ such that } \pi_i \circ \mu= \hat{\mu}_i, \quad i \in \{1,2\}. 
\end{equation}
 Now let $\tau:B^G\to(A_1\times A_2)^G$ be the cellular automaton defined by the local function $\mu$. We shall check that the diagram given in Figure \ref{dia-prodCA} commutes.
\begin{figure}[h!]
\centering
\begin{tikzcd}
 & B^G \arrow[ld, "\tau_1"'] \arrow[rd, "\tau_2"] \arrow[d, "\tau"] & \\
 A_1^G & (A_1\times A_2)^G \arrow[l, "\pi_1^G"] \arrow[r, "\pi_2^G"'] & A_2^G
 \end{tikzcd}
 \caption{Product of two configuration spaces in $\CA(G)$.}
\label{dia-prodCA}
\end{figure}
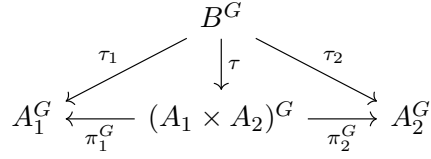
Indeed, for any $x\in B^G$ and $g\in G$, we have
\begin{eqnarray*}
(\pi_i^G\circ\tau)(x)(g) &=& \pi_i ( \tau(x) (g) ) \\
&= &\pi_i(\mu((g\cdot x)|_S))\\
&=& \hat{\mu}_i((g\cdot x)|_S)  \quad \quad \quad \text{(by eq. (\ref{eq-proj}))} \\ 
&=& \mu_i((g\cdot x)|_{S_i})\\
&=& \tau_i(x)(g).
\end{eqnarray*}
Therefore, $\pi_i^G\circ\tau = \tau_i$, for $i \in \{1,2\}$. 

Finally, to prove that $\tau$ is unique, suppose there exists another cellular automaton $\sigma:B^G\to(A_1\times A_2)^G$ such that $\tau_i=\pi_i^G\circ\sigma$. Then, for all $x\in B^G$ and $g\in G$ we have:
\[ \pi_i(\sigma(x)(g)) = \pi_i^G(\sigma(x))(g)=\pi_i^G(\tau(x))(g)=\pi_i(\tau(x)(g))\] 
for $i \in \{1,2 \}$. Hence $\sigma(x)(g)=\tau(x)(g)$ for all $x\in B^G$ and $g\in G$,
which implies that $\sigma=\tau$.
\end{proof}

\begin{corollary}
For any cellular automata $\tau_1 : A_1^G \to B_1^G$ and $\tau_2 : A_2^G \to B_2^G$ in $\CA(G)$ there is a unique cellular automaton $\tau_1 \times \tau_2 : (A_1 \times A_2)^G \to (B_1 \times B_2)^G$ such that
\[ \pi_{B_1}^G \circ (\tau_1 \times \tau_2)  = \tau_1 \circ \pi_{A_1}^G \text{ and } \pi_{B_2}^G \circ (\tau_1 \times \tau_2) = \tau_2 \circ \pi_{A_2}^G,  \]  
This is equivalent to,
\[(\tau_1 \times \tau_2)(x)(g) = (\tau_1(\pi_{A_1} \circ x) (g), \tau_2(\pi_{A_2} \circ x)(g)),  \quad \forall x \in (A_1 \times A_2)^G, g \in G. \] 
\end{corollary}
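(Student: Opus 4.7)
The plan is to derive this corollary as an immediate application of Theorem \ref{prod-mor} to the category $\CA(G)$, using Theorem \ref{th-prod} to know that $\CA(G)$ has finite products with the explicit form described there. Since Theorem \ref{th-prod} identifies the product of $A_1^G$ and $A_2^G$ as $(A_1 \times A_2)^G$ with projections $\pi_{A_i}^G$, and likewise for $B_1^G$ and $B_2^G$, Theorem \ref{prod-mor} instantly produces a unique morphism $\tau_1 \times \tau_2 : (A_1 \times A_2)^G \to (B_1 \times B_2)^G$ in $\CA(G)$ satisfying $\pi_{B_i}^G \circ (\tau_1 \times \tau_2) = \tau_i \circ \pi_{A_i}^G$ for $i \in \{1,2\}$. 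This is the existence and uniqueness half.

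For the explicit formula, I would proceed by verification and invoke uniqueness. Recall from Example \ref{ex-ind} that for any $y \in (A_1 \times A_2)^G$ and $g \in G$ we have $\pi_{A_i}^G(y)(g) = \pi_{A_i}(y(g))$, i.e.\ $\pi_{A_i}^G(y) = \pi_{A_i} \circ y$. Hence, for every $x \in (A_1 \times A_2)^G$ and $g \in G$,
\[
(\tau_i \circ \pi_{A_i}^G)(x)(g) = \tau_i(\pi_{A_i} \circ x)(g).
\]
Define the candidate $\sigma : (A_1\times A_2)^G \to (B_1\times B_2)^G$ by
\[
\sigma(x)(g) := \bigl(\tau_1(\pi_{A_1} \circ x)(g),\ \tau_2(\pi_{A_2} \circ x)(g)\bigr).
\]
Then, applying $\pi_{B_i}$ coordinate-wise, one computes that $\pi_{B_i}^G \circ \sigma = \tau_i \circ \pi_{A_i}^G$ holds by construction.

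Finally I would check that $\sigma$ really is a cellular automaton, so that it lives in $\CA(G)$ and qualifies as a competitor to the morphism $\tau_1 \times \tau_2$ produced by the universal property. This is easy: letting $S_i$ be a memory set for $\tau_i$ with local function $\mu_i : A_i^{S_i} \to B_i$, the set $S := S_1 \cup S_2$ and the local function
\[
\mu : (A_1 \times A_2)^S \to B_1 \times B_2,\quad \mu(z) := \bigl(\mu_1((\pi_{A_1} \circ z)|_{S_1}),\ \mu_2((\pi_{A_2} \circ z)|_{S_2})\bigr),
\]
define $\sigma$ in the sense of Definition \ref{def-CA}, because the shift action on $(A_1 \times A_2)^G$ commutes with the pointwise composition with $\pi_{A_i}$. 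Once $\sigma$ is known to be a cellular automaton satisfying the same commutation relations, the uniqueness clause from the first paragraph forces $\sigma = \tau_1 \times \tau_2$, yielding the stated formula. The only mildly delicate step is the identification $\pi_{A_i}^G(x) = \pi_{A_i} \circ x$ together with the observation that shift and pointwise post-composition with $\pi_{A_i}$ commute; everything else is routine unpacking of definitions.
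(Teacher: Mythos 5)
Your proposal is correct and follows exactly the route the paper intends: the paper states this corollary without proof, as an instance of Theorem \ref{prod-mor} applied to $\CA(G)$ via the products of Theorem \ref{th-prod}. Your additional verification that the explicit coordinatewise formula defines a genuine cellular automaton (memory set $S_1 \cup S_2$, shift commuting with post-composition by $\pi_{A_i}$) and hence coincides with $\tau_1 \times \tau_2$ by uniqueness is sound and in fact supplies the justification for the ``this is equivalent to'' clause that the paper leaves implicit.
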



\section{Weak product of generalized cellular automata}

The main idea in the definition of \emph{generalized cellular automata} is to consider CA between configuration spaces over different groups. The following is a slightly generalized version of the definition introduced in \cite{Vaz2022}, as it also involves two different alphabets.

\begin{definition}
Let $\phi : H \to G$ be a group homomorphism, and let $A$ and $B$ be sets. A \emph{$\phi$-cellular automaton} is a function $\mathcal{T} : A^G \to B^H$ such that there is a finite subset $S \subseteq G$ and a local function $\mu : A^S \to B$ satisfying
\[ \mathcal{T}(x)(h) = \mu( (\phi(h) \cdot x) \vert_S), \quad \forall x \in A^G, h \in H.  \]
\end{definition}

We say that $\mathcal{T} : A^G \to B^H$ is a \emph{generalized cellular automaton} if it is a $\phi$-cellular automaton for some group homomorphism $\phi : H \to G$. Note that when $G = H$, an $\id$-cellular automaton is the same as a cellular automaton over $G$. 

\begin{example}
Consider the group of integers $\mathbb{Z}$ and the direct product of $\mathbb{Z}^2$. Define a homomorphism $\phi : \mathbb{Z}^2 \to \mathbb{Z}$ by $\phi(n,m) := n+m$, for all $(n,m) \in \mathbb{Z}^2$. Let $S:= \{-1,0,1\} \subseteq \mathbb{Z}$. Then, any local function $\mu : A^S \to B$ defines a $\phi$-cellular automaton $\mathcal{T} : A^{\mathbb{Z}} \to B^{\mathbb{Z}^2}$ by
\[ \tau(x)(n,m) = \mu(x(n+m-1), x(n+m), x(n+m +1)), \quad \forall x \in A^{\mathbb{Z}}, (n,m ) \in \mathbb{Z}^2. \] 
\end{example}

As shown in \cite[Lemma 4]{Further}), for every $\phi$-cellular automaton $\mathcal{T} : A^G \to B^H$ there exists a unique cellular automaton $\tau : A^G \to B^G$ such that 
\begin{equation}\label{factor}
 \mathcal{T} = \phi^*_B \circ \tau,
\end{equation}
where $\phi^*_B : B^G \to B^H$ is the $\phi$-cellular automaton defined by
\[ \phi^*_B(x) = x \circ \phi, \quad \forall x \in B^G. \]
It is easy to check that for every pair of homomorphisms $\phi : H \to G$ and $\psi : K \to H$, we have
\begin{equation}\label{eq-phi}
 (\phi \circ \psi)_B^* = \psi^*_B \circ \phi^*_B.   
\end{equation}

It follows from \cite[Theorem 2]{Vaz2022} that the composition of a $\phi$-cellular automaton $\mathcal{T} : A^G \to B^H$ with a $\psi$-cellular automaton $\mathcal{S} : B^H \to C^K$ is a $(\psi \circ \phi)$-cellular automaton. Hence, we shall consider the category $\GCA$ whose objects are configuration spaces of the form $A^G$, where $A$ is a set and $G$ is a group, and whose morphisms are $\phi$-cellular automata $\mathcal{T} : A_1^{G_1} \to A_2^{G_2}$, where $\phi : G_2 \to G_1$ is a group homomorphism. 

For two groups $G_1$ and $G_2$, let $G_1 \ast G_2$ be their free product. The cannonical embeddings $\iota_i : G_i \to G_1 \ast G_2$ induce $\iota_i$-cellular automata 
\[ (\iota_i)_C^* : C^{G_1 \ast G_2} \to C^{G_i}, \]
for any alphabet $C$.

\begin{theorem}\label{th-weak}
A weak product of two configuration spaces $A_1^{G_1}$ and $A_2^{G_2}$ in $\GCA$ is the configuration space $(A_1 \times A_2)^{G_1 \ast G_2}$ together with the projections 
\[ \gamma_i : (A_1 \times A_2)^{G_1 \ast G_2} \to A_i^{G_i} \quad \text{ given by } \quad \gamma_i := (\iota_i)_{A_i}^*  \circ \pi_{A_i}^{G_1 \ast G_2}, \] 
where $\pi_{A_i}^{G_1 \ast G_2}$ are the cellular automata projections considered in Theorem \ref{th-prod}.
\end{theorem}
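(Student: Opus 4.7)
The plan is to verify the universal property of the weak product by constructing the required morphism from $B^H$ into $(A_1 \times A_2)^{G_1 \ast G_2}$, combining the universal property of the free product in $\Grp$ with the universal property of the product in $\CA(H)$ already established in Theorem \ref{th-prod}.

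First, I would fix an arbitrary object $B^H \in \GCA$ together with morphisms $\mathcal{T}_1 : B^H \to A_1^{G_1}$ and $\mathcal{T}_2 : B^H \to A_2^{G_2}$. By definition of $\GCA$, each $\mathcal{T}_i$ is a $\phi_i$-cellular automaton for some group homomorphism $\phi_i : G_i \to H$, and by the factorization (\ref{factor}) there exists a unique ordinary cellular automaton $\tau_i : B^H \to A_i^H$ in $\CA(H)$ satisfying $\mathcal{T}_i = (\phi_i)^*_{A_i} \circ \tau_i$. I would then apply two separate universal properties: on the group side, since $G_1 \ast G_2$ is the coproduct in $\Grp$ (Example \ref{Examples of coproducts}.2), the homomorphisms $\phi_1, \phi_2$ induce a unique homomorphism $\phi : G_1 \ast G_2 \to H$ with $\phi \circ \iota_i = \phi_i$; on the cellular automata side, by Theorem \ref{th-prod} applied in $\CA(H)$, the morphisms $\tau_1, \tau_2$ induce a unique cellular automaton $\tau : B^H \to (A_1 \times A_2)^H$ satisfying $\pi_{A_i}^H \circ \tau = \tau_i$.

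Next I would define $\mathcal{T} := \phi^*_{A_1 \times A_2} \circ \tau$, which is a $\phi$-cellular automaton from $B^H$ to $(A_1 \times A_2)^{G_1 \ast G_2}$, and verify that $\gamma_i \circ \mathcal{T} = \mathcal{T}_i$ for $i \in \{1, 2\}$. The key computation relies on the naturality-type identity
\[ \pi_{A_i}^{G_1 \ast G_2} \circ \phi^*_{A_1 \times A_2} = \phi^*_{A_i} \circ \pi_{A_i}^H, \]
which follows immediately from unravelling both sides as $y \mapsto \pi_{A_i} \circ y \circ \phi$. Combining this with equation (\ref{eq-phi}) applied to $\phi \circ \iota_i = \phi_i$, I obtain
\[ \gamma_i \circ \mathcal{T} = (\iota_i)^*_{A_i} \circ \pi_{A_i}^{G_1 \ast G_2} \circ \phi^*_{A_1 \times A_2} \circ \tau = (\iota_i)^*_{A_i} \circ \phi^*_{A_i} \circ \pi_{A_i}^H \circ \tau = (\phi_i)^*_{A_i} \circ \tau_i = \mathcal{T}_i, \]
which is exactly the required commutativity.

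The main conceptual point (rather than a technical obstacle) is recognizing that the construction must decouple the group data from the alphabet data: $\phi$ is produced by the coproduct property of $\ast$ in $\Grp$, while $\tau$ is produced by the product property of $\times$ in $\CA(H)$; the factorization (\ref{factor}) is what allows these two universal properties to be combined into a single $\phi$-cellular automaton. Note that uniqueness of $\mathcal{T}$ is not claimed, which is consistent with the statement only asserting a weak product; indeed, $\mathcal{T}$ is uniquely determined given the choice of $\phi_i$, but a given generalized CA $\mathcal{T}_i$ may admit several admissible $\phi_i$, leading to possibly distinct $\mathcal{T}$'s satisfying the diagram, as alluded to in the introduction via the failure of the unique homomorphisms property.
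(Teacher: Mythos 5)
Your proof is correct and follows essentially the same route as the paper's: factor each $\mathcal{T}_i$ via (\ref{factor}), obtain the homomorphism $\phi$ from the coproduct property of $G_1 \ast G_2$ in $\Grp$ and the cellular automaton $\tau$ from Theorem \ref{th-prod} in $\CA(H)$, and verify commutativity using equation (\ref{eq-phi}) together with the naturality identity $\pi_{A_i}^{G_1 \ast G_2} \circ \phi^*_{A_1 \times A_2} = \phi^*_{A_i} \circ \pi_{A_i}^H$. The only difference is presentational: the paper packages these same identities into the commuting diagram of Figure \ref{dia-proof}, whereas you write them as a chain of equalities.
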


\begin{figure}[h!] 
\centering
\begin{tikzcd}
 & & B^H \arrow[swap, ld, "\tau_1"] \arrow[rd, "\tau_2"] \arrow[d, "\tau"] & &   \\[1em]
& A_1^H \arrow[bend right=30, ld, "\phi_{A_1}^*"] \arrow[d, "(\phi + \psi)_{A_1}^*"]    & (A_1\times A_2)^H \arrow[l, "\pi_{A_1}^H"] \arrow[r, "\pi_{A_2}^H"']\arrow[d, "(\phi +\psi)_{A_1 \times A_2}^*"] & A_2^H \arrow[bend right=-30, rd, "\psi_{A_2}^*"]  \arrow[d, "(\phi + \psi)_{A_2}^*"]  &  \\[1em]
 A^{G_1}_1 & A_{1}^{G_1 \ast G_2} \arrow[l, "(\iota_1)_{A_1}^*"]  & (A_1 \times A_2)^{G_1 \ast G_2} \arrow[l, "\pi_{A_1}^{G_1 \ast G_2}"] \arrow[swap, r, "\pi_{A_2}^{G_1 \ast G_2}"] & A_2^{G_1 \ast G_2} \arrow[swap, r, "(\iota_2)_{A_2}^*"]   & A_2^{G_2}
 \end{tikzcd}
 \caption{Proof of the existence of a weak product in $\GCA$.}
\label{dia-proof}
 \end{figure}
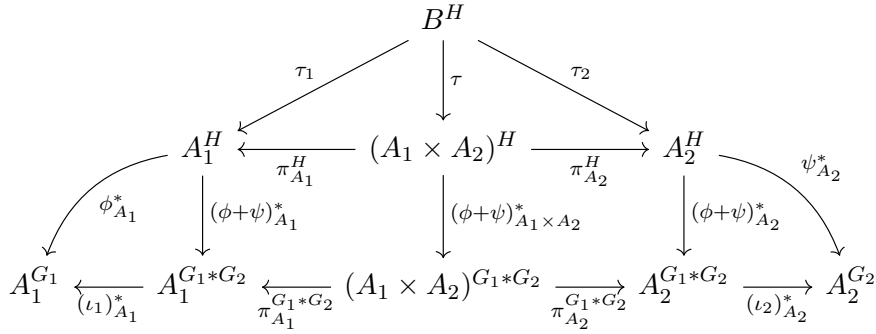
 
\begin{proof}
Let $\phi^*_{A_1} \circ \tau_1 : B^H \to A_1^{G_1}$ and $\psi^*_{A_2} \circ \tau_2 : B^H \to A_2^{G_2}$ be two generalized CA, where $\phi : G_1 \to H$ and $\psi : G_2 \to H$ are group homomorphisms and $\tau_i : B^H \to A_i^{G_i}$ are cellular automata. By the universal property of the coproduct in the category of groups, there exists a unique group homomorphism $\phi + \psi : G_1 \ast G_2 \to H$ such that the following diagram commutes:
\[\begin{tikzcd}
& H & \\[1em]
G_1 \arrow[ru, "\phi"] \arrow[r, "\iota_1"'] & G_1 \ast G_2 \arrow[u, "\phi + \psi"] & G_2 \arrow[l, "\iota_2"] \arrow[lu, "\psi"']
\end{tikzcd}\]
Equation (\ref{eq-phi}) implies that the following diagram commutes for every alphabet $C$:
\begin{equation}\label{eq-dia}
\begin{tikzcd}
& C^H \arrow[bend right=30, ld, "\phi^*_C"'] \arrow[bend right=-30, rd, "\psi^*_C"] \arrow[d, "(\phi + \psi)^*_C"] & \\[1em]
C^{G_1} &  C^{G_1 \ast G_2}  \arrow[l, "(\iota_1)^*_C"] \arrow[r, "(\iota_2)^*_C"'] & C^{G_2}
\end{tikzcd}
\end{equation}

Our goal is to show that the diagram of Figure \ref{dia-proof} commutes, where the cellular automaton $\tau : B^H \to (A_1 \times A_2)^H$ is given by the universal property of the product of two cellular automata in $\CA(H)$ (Theorem \ref{th-prod}). Clearly, the top triangle of diagram \ref{dia-proof} commutes because of Theorem \ref{th-prod}. Note that diagram (\ref{eq-dia}) with $C=A_1$ and $C=A_2$ gives us the commutativity of the leftmost and rightmost lower triangles of diagram \ref{dia-proof}, respectively. Finally, we will show that the left and right squares of diagram \ref{dia-proof} commute. For example, for the right square, note that for all $x \in (A_1 \times A_2)^H$, we have
\[  (\phi + \psi)^*_{A_2} \circ \pi_{A_2}^H(x) = (\pi_{A_2} \circ x) \circ (\phi + \psi) =  \pi_{A_2} \circ (x \circ (\phi + \psi)) = \pi_{A_2}^{G_1 \ast G_2} \circ (\phi + \psi)^*_{A_1 \times A_2} (x).  \] 
Similarly, the left square commutes.  
\end{proof}

The problem with showing that the weak product given in Theorem \ref{th-weak} is unique, and hence a product, is that the factorization (\ref{factor}) of generalized CA is not unique in general. We say that $\mathcal{T}$ has the \emph{unique homomorphism property} (UHP) if $\mathcal{T} = \phi^*_B \circ \tau = \psi_B^* \circ \tau$ implies $\phi = \psi$. It was shown in \cite[Corollary 1]{Further}, that if $G$ is a toersion-free abelian group, then every non-constant $\mathcal{T}$ has the UHP; this result was generalized  in \cite{SaloBlog} by removing the hypothesis of $G$ being abelian. However, note that constant generalized CA never have the UHP: if $\mathcal{T} : A^G \to B^H$ is constant, then $\mathcal{T} = \phi^*_B \circ \tau = \psi_B^* \circ \tau$ for every pair of homomorphisms $\phi$ and $\psi$. 

The coproduct in the category of groups gives us a coproduct of group homomorphisms: for any group homomorphisms $\phi : H_1 \to G_1$ and $\psi : H_2 \to G_2$ there exists a unique group homomorphisms $\phi \ast \psi : H_1 \ast H_2 \to G_1 \ast G_2$ such that the dual of the diagram in Figure \ref{dia-prodmor} commutes. Our last result shows the existence of a weak product of morphisms in $\GCA$.

\begin{corollary}
For any $\phi$-cellular automaton $\mathcal{T} : A_1^{G_1} \to B_1^{H_1}$ and a $\psi$-cellular automaton $\mathcal{S} : A_2^{G_2} \to B_2^{H_2}$ there exists $(\phi \ast \psi)$-cellular automaton
\[ \mathcal{T} \times \mathcal{S} : (A_1 \times A_2)^{G_1 \ast G_2} \to (B_1 \times B_2)^{H_1 \ast H_2},  \]
such that the diagram of Figure \ref{dia-last} commutes.
\begin{figure}[h!]
\centering
\begin{tikzcd}
A_1^{G_1} \arrow[d, "\mathcal{T}" ] & A_1^{G_1 \ast G_2} \arrow[l, "(\iota_1)_{A_1}^*"]  & (A_1 \times A_2)^{G_1 \ast G_2} \arrow[l, "\pi_{A_1}^{G_1 \ast G_2}"'] \arrow[r, "\pi_{A_2}^{G_1 \ast G_2}"] \arrow[d, "\mathcal{T} \times \mathcal{S}", dashed] & A_2^{G_1 \ast G_2} \arrow[r, "(\iota_2)_{A_2}^*"] & A_2^{G_2} \arrow[d, "\mathcal{S}" ] \\
B_1^{H_1} & B_1^{H_1 \ast H_2}  \arrow[l, "(\iota_1)_{B_1}^*" ]  & (B_1\times B_2)^{H_1 \ast H_2} \arrow[l, "\pi_{B_1}^{H_1 \ast H_2}"] \arrow[r, "\pi_{B_2}^{H_1 \ast H_2}"'] & B_2^{H_1 \ast H_2} \arrow[r, "(\iota_2)_{B_2}^*" ] & B_2^{H_2}
\end{tikzcd}
\caption{Weak product to two generalized cellular automata.}
\label{dia-last}
\end{figure}
\end{corollary}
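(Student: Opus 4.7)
My plan is to derive $\mathcal{T} \times \mathcal{S}$ as a single application of the weak-product universal property of Theorem \ref{th-weak} to the pair $(B_1^{H_1}, B_2^{H_2})$, using $(A_1 \times A_2)^{G_1 \ast G_2}$ as the test object. Once the arrow is produced, the only work left is to identify its underlying group homomorphism as $\phi \ast \psi$ and to check that the diagram of Figure \ref{dia-last} commutes.

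To set this up, I would abbreviate the weak-product projections of Theorem \ref{th-weak} by
\[ \gamma_i^A := (\iota_i)_{A_i}^* \circ \pi_{A_i}^{G_1 \ast G_2}, \qquad \gamma_i^B := (\iota_i)_{B_i}^* \circ \pi_{B_i}^{H_1 \ast H_2}, \]
and form the compositions $\mathcal{T} \circ \gamma_1^A : (A_1 \times A_2)^{G_1 \ast G_2} \to B_1^{H_1}$ and $\mathcal{S} \circ \gamma_2^A : (A_1 \times A_2)^{G_1 \ast G_2} \to B_2^{H_2}$. Since $\gamma_i^A$ is already in the factored form of equation (\ref{factor}) with underlying hom $\iota_i : G_i \to G_1 \ast G_2$, these two composites are generalized CA whose underlying homomorphisms are $\iota_1 \circ \phi : H_1 \to G_1 \ast G_2$ and $\iota_2 \circ \psi : H_2 \to G_1 \ast G_2$, respectively.

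Applying Theorem \ref{th-weak} to this pair produces a generalized CA $\mathcal{T} \times \mathcal{S} : (A_1 \times A_2)^{G_1 \ast G_2} \to (B_1 \times B_2)^{H_1 \ast H_2}$ satisfying $\gamma_1^B \circ (\mathcal{T} \times \mathcal{S}) = \mathcal{T} \circ \gamma_1^A$ and $\gamma_2^B \circ (\mathcal{T} \times \mathcal{S}) = \mathcal{S} \circ \gamma_2^A$. Expanding the definitions of $\gamma_i^A$ and $\gamma_i^B$, these two identities are precisely the commutativity of the two outer rectangles of Figure \ref{dia-last}, which is everything the diagram asserts since there are no vertical arrows in columns two and four.

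Finally, I would identify the underlying homomorphism of $\mathcal{T} \times \mathcal{S}$ as $\phi \ast \psi$. Following the construction in the proof of Theorem \ref{th-weak}, the produced arrow has underlying hom equal to the unique map $H_1 \ast H_2 \to G_1 \ast G_2$ induced by the universal property of the free product on the pair $(\iota_1 \circ \phi, \iota_2 \circ \psi)$, and by definition of the coproduct of morphisms in $\Grp$ this map is exactly $\phi \ast \psi$. The main obstacle in the argument is this last piece of bookkeeping, making sure that the two applications of the universal property of the free product (one for the domain groups $H_i$, one for the target groups $G_i$) line up correctly; no substantive difficulty arises beyond Theorem \ref{th-weak} itself.
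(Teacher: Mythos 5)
Your proposal is correct and matches the route the paper intends: the corollary is obtained exactly as you describe, by applying the weak-product property of Theorem \ref{th-weak} to the test object $(A_1 \times A_2)^{G_1 \ast G_2}$ with the composites $\mathcal{T} \circ (\iota_1)_{A_1}^* \circ \pi_{A_1}^{G_1 \ast G_2}$ and $\mathcal{S} \circ (\iota_2)_{A_2}^* \circ \pi_{A_2}^{G_1 \ast G_2}$, in the same spirit as Theorem \ref{prod-mor}. Your extra bookkeeping identifying the underlying homomorphism of the constructed arrow as $(\iota_1\circ\phi)+(\iota_2\circ\psi)=\phi\ast\psi$ is a welcome addition, since the weak universal property alone does not pin this down.
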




\section*{Acknowledgments}

The second author was supported by a CONAHCYT \emph{Becas nacionales para estudios de posgrado}. The third author was supported by a CONAHCYT grant No. CBF-2023-2024-2630.


\bibliography{mibib}{}
\bibliographystyle{plain}

\end{document}